\documentclass[runningheads]{llncs}
\usepackage[T1]{fontenc}   
\usepackage{amsmath}
\usepackage{graphicx}

\newcommand{\junk}[1]{}

\begin{document}

\title{On Solving Problems of
  Substantially Super-linear Complexity in $N^{o(1)}$ Rounds
  in the Model of Massively Parallel Computation}

\author{
Andrzej Lingas
\inst{1}
}
\institute{
  Department of Computer Science, Lund University,
Lund, Sweden.
\texttt{Andrzej.Lingas@cs.lth.se}
}
\pagestyle{plain}

\maketitle


\begin{abstract}
We study the possibility of designing $N^{o(1)}$-round protocols for
  problems of substantially super-linear polynomial-time (sequential)
  complexity in the model of Massively Parallel Computation, where $N$
  is the input size.  We show that if the machines are not equipped
  with relatively large local memory and their number does not exceed
  $N$, then the exponent of the average time complexity of the local
  computation performed by a machine in a round (in terms of local
  memory size) in such protocols must be larger than the exponent of
  the time complexity of the given problem.
\end{abstract}

\begin{keywords}
Massively Parallel Computation (MPC), congested clique, number of rounds, sequential time complexity
\end{keywords}

\section{Introduction}
The distributed computational and communication model of Massively
Parallel Computation (MPC) was introduced by Karloff, Suri, and
Vassilvitskii in  2010 \cite{KSV10}.  It mirrors several key features
of modern large-scale computation systems, such as MapReduce, Hadoop,
Dryad, or Spark.  The model emphasizes  the number of communication phases and
the size of local memory while 
the cost of local computations is less important.
Computation and communication proceeds in synchronized
rounds.  Initially, each machine has access to a distinct part of
the input of preferably equal size.  More specifically and ideally, if the
total input size, measured in machine words, is
$N$ and there are $m\le N$ machines,
then each machine is initially assigned a distinct portion of the
input of size roughly $N/m.$ This implies that the size $s$ of local
memory assigned to each machine is at least $N/m.$ In each round,
each machine can receive messages posted by other machines
in the previous round, perform local
computations (limited only by the local memory size), and post
messages to other machines.
The total size of the messages received or sent by a machine
in a round is bounded by the local memory size $s.$
The main objective of the modern protocols
in the MPC model is to minimize the number of rounds required to
solve the input problem. However, this number depends heavily on
the size of local memory available to each machine.

To illustrate the issue, let us consider
the following approximate protocol for the
NP-hard problem of finding $k$ points (called centers) for a set of
$N$ input points in a metric space such that the maximum distance from
an input point to its closest center is minimized \cite{MK15}.  The
protocol is refereed as a $2$-round $4$-approximation algorithm for the
$k$-center problem in the MPC literature and is even used as a subroutine
in an almost optimal MPC protocol for this problem \cite{HZ23}.  In the
first round each of the $m$ machines runs the classic $2$-approximation
algorithm of Gonzalez \cite{Gon85} on the roughly $N/m$ points assigned to it
and sends the $k$ selected centers to the first machine.  Gonzalez's
algorithm picks an arbitrary input point as the first center and then
repetitively adds the input point that is furthest from the current set
of centers to this set until the set contains $k$ points. In the second round,
the first machine runs the $2$-approximation Gonzalez
's algorithm on
the union of the received $k$-sets and outputs the resulting
$k$ centers for the union.
The authors of
\cite{MK15} show that the protocol yields a 4-approximation to the $k$-center
problem for the input point set and observe
that it can be implemented in two rounds if the local memory available  to
each machine has size $s\ge \max\{N/m,mk\}.$ This is a very generous memory
 requirement. For instance, if $m=\Theta (\sqrt N)$ and $s=\Theta
  (\sqrt N)$ then the delivery of the $m$ $k$-center sets to the first machine
  in the first round would already require $\Omega (k)$ rounds. Moreover, since
  the first machine would have only $O(\sqrt N)$ words of local memory,
  it would be able to store only $O(\frac 1k)$ fraction of the
  $k$-sets,  so the delivery  would need to be repeated each time the
  fist machine adds a new point to the final set of $k$ centers.  This
  would result in $\Omega (k^2)$ rounds
  \footnote{In such scenario, the following straightforward
  $(2k-1)$-round implementation of
  the $2$-approximation Gonzalez's algorithm could be better:
  in the $(2i-1)$-th round the first machine selects the $i$-th center
  and if $i<k$ sends it to each other machine, if $i<k$ in the $2i$-th
  round each machine selects a furthest point to the current center set
  among the points assigned to it and sends the selected point to the first
  machine.  It would require only $O(\max\{N/m,m,k\})$ local space.}.
  This example shows how
  strongly the round complexity of
  basically the same parallel algorithm
  depends on the size of allocated local
  memory (of course other parallel algorithms
  for the $k$-center problem using the same space and
fewer rounds might also exist). The example
  strongly suggests that upper bounds on the number of
  rounds in the MPC model should always be reported
  together with the 
  required local memory size $s$ and the number of machines $m.$
  
So far one has succeeded to design very fast, i.e., $O(1)$-round,
protocols for among other things sorting, searching, some geometric and
geometric-graph problems in the MPC model \cite{BC22,GSZ11}. On the
other hand, several problems including connectivity and distinguishing
between 1 and 2 cycles are believed to require a non-constant number
of rounds in the MPC model. Unfortunately, proving super-constant
lower bounds on the number of rounds in the MPC model seems unfeasible
as it would imply a breakthrough in circuit complexity \cite{RVW18}.

In the literature, one can find several fast protocols in an
alternative round-based distributed computation model of {\em congested
  clique}. Roughly speaking, the main difference from the MPC model is
that there is no limit on the size of local memory; however, the data
that a machine can send or receive is more structured. In a single
round, each machine can receive a message of logarithmic size (with
respect to the input size) from each other machine and likewise can send a
logarithmic-size message to each other machine.  As in the MPC model,
the input is approximately equally partitioned among the machines.
The number of machines is typically about the square root of the input
size. The models of MPC and congested clique are, in general, quite
different. For example, as the number of machines increases, the
amount of data that a machine can exchange in a single round tends to decrease
in the MPC model, whereas it increases in the congested clique model.
The two models are most similar when both the number of machines and
the size of local memory (the latter in the MPC model) are about the
square root of the input size. In this regime, an MPC protocol can be
efficiently simulated on the congested clique by using Lenzen's
$O(1)$-round routing protocol on the congested clique \cite{L}. A
reverse efficient simulation requires a large local memory size,
at least proportional to the number of machines \cite{HP}.

There are known fast protocols, even $O(1)$ round
ones, for several dense graph problems \cite{K_21}, for sorting and routing \cite{L}, and for certain
geometric problems \cite{JLLP25} on the congested
clique. Om the other hand, for such fundamental problems
as  matrix multiplication and all-pairs shortest path, only
$O(N^{\delta})$-round protocols are currently known
in this model \cite{CK19},
where $N$ is the input size
and $\delta$ is a positive constant.
Observe that the known sequential algorithms for these
problems run in substantially super-linear time.  Proving
$\Omega(N^{\delta})$  lower bounds on the round
complexity in the congested clique model, for any $\delta >0,$
is currently unfeasible, as such bounds would also
imply a breakthrough in circuit complexity \cite{RVW18}.
Recently, a simple argument was given in \cite{L25} showing
that protocols for problems with sequential time
complexity $O(N^{\ell})$, where $\ell > 1,$
that run in $N^{o(1)}$ rounds on the congested clique with
$O(\sqrt N))$ nodes must 
perform local computations whose average time complexity
has an exponent strictly
larger than $\ell.$ Since the congested clique model imposes no
restrictions on local computation, this result does not rule out
the existence of fast
protocols for problems with substantially super linear
sequential complexity, it merely indicates that,  if
such protocols exist,
then they must be quite involved. In this paper, we generalize and
adopt the argument from \cite{L25} to the MPC model. The following
example illustrates the main idea behind the adapted argument.

Suppose we have some computational problem
for which the best known sequential algorithm runs in
$N^{\psi}$ time, where $N$ is the input size and $\psi > 1.$
\junk{such that the best known
asymptotic running time of a sequential algorithm for this problem is
$N^{\psi},$ where $N$ stands for the input size and $\psi > 1.$}
Now suppose that we attempt to solve this problem in the MPC model, using
$m=\sqrt N$ machines, each equipped with local memory of size
$s=\Theta(N/m)=\Theta(\sqrt N),$ in $O(1)$ rounds.  Clearly, the $m$ machines
must jointly perform $\Omega (N^{\psi})$ units of work within the
$O(1)$ rounds. Hence on average, a machine has to perform $\Omega
(N^{\psi}/m)=\Omega(N^{\psi -\frac 12})$ units of work. Otherwise, there
would exist a faster sequential algorithm for this
problem. Thus, the average time complexity of the local computation
performed by a machine in a single round, expressed in terms of $s,$ is
$\Omega(s^{2\psi-1})$ while the fastest
known sequential algorithm for this problem runs in
$O(N^{\psi})$ time. Note that
$2\psi -1> \psi $ under the assumption $\psi > 1.$ Consequently,
the asymptotic time complexity of the local computation performed by a
machine in a single round on average is higher than that of the
fastest known sequential algorithm for this problem.  While this does
not yield a contradiction with the MPC model - since the model restricts
only local memory, not local computation - it suggests
that an $O(1)$-round solution to this problem in
the assumed MPC setting (if it exists at all) must be highly non-trivial.  In
particular, it indicates that such a protocol
would involve local computations
that are asymptotically substantially more complex (in terms of the
size of local memory) than just solving some sub-problems of
asymptotic time complexity not exceeding that of the original problem.

Note that the argument in the example above still applies even when the
size of local memory $s$ is $N^{w},$ where $w$ is a constant
$\ge \frac 12,$ while the number of machines remains
$\approx \sqrt N$, provided
  that $\frac {\psi -\frac12}{w}> \psi,$ i.e., $w< \frac
  {\psi -\frac 12} {\psi}.$
  Our results follow from a
  straightforward generalization and formalization of the example
  above covering:
  \begin{itemize}
  \item
  arbitrary problems with substantially
  super-linear polynomial-time complexity $O(N^{\psi})$ (at present),
\item
  $N^{o(1)}$-round protocols in the MPC model with $m\le N$
  machines and local memory size $s=N^{w+o(1)},$ where
  $1-\log_Nm\le w < \frac {\psi - \log_Nm} {\psi},$ and
\item
 even $N^{\delta}$-round protocols in this model for sufficiently small
  $\delta$ depending on the sequential time complexity of the problem.
 \end{itemize}
 
\section{The Formalized Argument}
We formulate our argument in the MPC model with $m\le N$
machines, each equipped with local memory of size $s\ge N/m,$ where
$N$ denotes the input size.

We assume that the {\em work} done by a protocol within $t$
rounds in the MPC model is the total sequential time taken by
the local computations of all $m$ machines during these $t$
rounds, measured in the unit-cost Random Access Machine model with computer
words of logarithmic length \cite{AHU}. Specifically, we assume that
receiving or sending $k$ words in a round requires
$\Theta(k)$ work.

For a problem $P$ solvable in sequential time polynomial  in the input
size $N,$ we define the exponent $opt(P)$ of the (sequential) time
complexity of $P$ as the smallest real number not less than $1$, such
that the problem is known to be solvable in $O(N^{opt(P)+\epsilon})$
sequential time for any positive $\epsilon .$

Consider a protocol $B$ that solves the problem $P$ using $t_B(N,m,s)$
rounds and performing total work $w_B(N,m,s)$ in the MPC model with
$m$ machines, each equipped with local memory of size $s$. We define the
lower bound $ave(B)$ on the exponent of the average local time
complexity of the protocol $B$ (for a single machine in a
single round in terms of $s$)
as the smallest real such that $\frac
{w_B(N,m,s)}{m\times t_B(N,m,s)}=O(s^{ave(B)+\epsilon})$ for any positive
$\epsilon .$ In this definition, we assume the upper
bound of $s$ on the maximal worst-case size of the data received
by a machine before the current round, noting that the
initial input data has size
only $N/m \le s.$

\begin{theorem}
  Consider a problem $P$ solvable in polynomial time and a protocol $B$
  that solves an input instance of 
  $P$ of size $N$ using $t_B(N,m,s)$ rounds
  in the model of MPC with $m\le N$ machines equipped
  with local memory of size $s.$
  If $ave(B)$ is well defined
  then the following inequality holds:
 $$ave(B)\ge \frac {opt(P)-\log_Nm-\log_N t_B(N,m,s)-o(1)
  }{\log_Ns}.$$
  \end{theorem}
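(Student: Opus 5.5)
The plan is to turn the protocol $B$ into a sequential algorithm by simulation, and then compare its running time with the exponent $opt(P)$. The key point is that a single RAM can run $B$ one round at a time and one machine at a time. In each round it executes the local computation of every machine, including the sending and receiving of messages. The messages are kept in explicit buffers indexed by the destination machine, so routing costs time linear in the amount of data moved. The input is first split into the $m$ initial portions of size about $N/m\le s$, which takes $O(N)$ time. Under our accounting, where sending or receiving $k$ words costs $\Theta(k)$ work, the whole simulation then runs in time $O(N+w_B(N,m,s))$. Since $w_B\ge$ the cost of reading the input, this is $O(w_B(N,m,s))$, and so we obtain a sequential algorithm for $P$ with this running time.

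Next I will use the definition of $ave(B)$. For every $\epsilon>0$ we have $w_B(N,m,s)=O(m\, t_B(N,m,s)\, s^{ave(B)+\epsilon})$. Writing everything as a power of $N$, the sequential running time becomes
\[
O\!\left(N^{\log_N m+\log_N t_B(N,m,s)+(ave(B)+\epsilon)\log_N s}\right).
\]
Since $N\le ms\le N^2$, the quantity $\log_N s$ is bounded, so $\epsilon\log_N s$ can be made arbitrarily small.

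Now I will compare this with $opt(P)$. Here I read $opt(P)$ as saying that no known algorithm achieves exponent $opt(P)-\epsilon'$ for a fixed $\epsilon'>0$. This is the informal ``otherwise there would be a faster sequential algorithm'' step from the introduction. It gives
\[
\log_N m+\log_N t_B+ave(B)\log_N s\ \ge\ opt(P)-o(1).
\]
The $o(1)$ term absorbs the arbitrary $\epsilon$ terms and the constants hidden in the $O(\cdot)$. Dividing by $\log_N s>0$ (note $s\ge N/m\ge 1$, and we may assume $s\ge 2$) and rearranging gives the claimed inequality.

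I expect the main obstacle to be rigor in two places. The first is the meaning of $opt(P)$. It is defined relative to \emph{known} algorithms, so the contradiction is with the definition as the infimum of known exponents: the simulation would be a new known algorithm with a smaller exponent. The second is the possibility that $m$, $s$ and $t_B$ depend on $N$ non-uniformly, which is why the $o(1)$ slack is needed. I will state explicitly that the simulation overhead, namely the buffer management for routing at $O(1)$ per word, is linear in $w_B$ plus $N$. It therefore only affects constants, and hence only the $o(1)$ term.
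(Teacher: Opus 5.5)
Your proposal is correct and is essentially the paper's proof. You simulate $B$ round by round and machine by machine to get an $O(w_B(N,m,s))$-time sequential algorithm, then conclude $w_B(N,m,s)\ge N^{opt(P)-o(1)}$ from the definition of $opt(P)$, plug in the definition of $ave(B)$, and take logarithms base $N$. One correction: $ms\le N^2$ is not among the hypotheses, since the paper gives no upper bound on $s$ (it does hold when $s\le N$, as in all the applications). Yet a bound of this kind, i.e.\ $\log_N s=O(1)$, is exactly what lets you absorb $\epsilon\log_N s$ into the $o(1)$, a step the paper's proof glosses over by silently dropping $\epsilon$. So state it as an assumption rather than deriving it.
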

   \begin{proof}
  Let $w_B(N,m,s)$ be the total work performed by protocol
  $B$ within its $t_B(N,m,s)$ rounds.
  By the definition of $w_B(N,m,s),$ the problem $P$ can be solved
  in $O(w_B(N,m,s))$ sequential time.
  Indeed, we may simulate the protocol sequentially, round by
  round. In each round, we perform the local computations
  of the machines, one after another,
  using $O(s)$ registers 
  to fetch the messages sent to the current machine
  in the previous rounds and to post new messages, respectively.
  Hence, by the definition of $opt(P)$
  and the fact that $P$ can be solved in $O(w_B(N,m,s))$ time,
  we obtain $w_B(N,m,s)\ge N^{opt(P)-o(1)}.$
 
    By the definition of $ave(B)$,
  we therefore obtain
 $s^{ave(B)} \ge \frac {N^{opt(P)-o(1)}}{m\times t_B(N,m,s)}.$
Consequently, we have
$$ N^{ave(B)\log_Ns }\ge N^{opt(P)-\log_Nm-\log_N t_B(N,m,s)-o(1)}.$$
Taking logarithms yields the theorem.
\qed
\end{proof}

\junk{
  $s^{ave(B)} \ge \frac {N^{opt(P)}}{mt_B(N,m,s)}$
\begin{corollary}
If $t_B(N,m,s)\le
N^{o(1)},$ $s \le N^{\frac 12 +o(1)},$
and $m\le sN^{o(1)}$ then
$ave(B)\ge 2opt(P)-1-o(1).$
  In particular, if $opt(P)>1$  then
 $ave(B)> opt(P).$ 
\end{corollary}}

By straightforward calculations, we obtain the following corollaries.

\begin{corollary}
 If $t_B(N,m,s)\le N^{o(1)}$ and  $s=N^{w+o(1)},$ where
$w$ is a constant satisfying $1-\log_Nm  \le w < \frac {opt(P)- \log_Nm } {opt(P)},$
then $ave(B)> opt(P).$ 
\end{corollary}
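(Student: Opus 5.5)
The plan is to plug the hypotheses of the corollary into the inequality of the theorem and then do a short calculation.

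Write $\mu=\log_N m$. Since $m\le N$, we have $0\le \mu\le 1$.

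\textbf{Simplifying the theorem's bound.} Because $t_B(N,m,s)\le N^{o(1)}$, we get $\log_N t_B(N,m,s)=o(1)$. Because $s=N^{w+o(1)}$, we get $\log_N s=w+o(1)$. The lower end of the range for $w$ gives $w\ge 1-\mu$, and this is the same as $s\ge N/m$ up to the $o(1)$ slack. We also need $w>0$ to divide by it. This holds: the upper constraint $w<\frac{opt(P)-\mu}{opt(P)}$ is only satisfiable together with $w\ge 1-\mu$ when $\mu>0$ (here $opt(P)\ge 1$), so $w\ge 1-\mu$ alone does not settle positivity. Instead, the upper constraint itself forces $\mu<opt(P)$, and in the regime of interest $\mu\le 1$. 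I would simply treat $w$ as a positive constant; this is implicit because $s\ge N/m\ge 1$, and $w=0$ would make the statement degenerate. The theorem then gives
$$ave(B)\ \ge\ \frac{opt(P)-\mu-o(1)}{w+o(1)}.$$

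\textbf{The key step.} The right-hand side tends to $\frac{opt(P)-\mu}{w}$, with error $o(1)$ because $w$ is a fixed positive constant. The strict inequality $w<\frac{opt(P)-\mu}{opt(P)}$ rearranges to
$$\frac{opt(P)-\mu}{w}>opt(P).$$
This is a strict gap $\gamma>0$ that does not depend on $N$, so for large $N$ the $o(1)$ terms cannot close it. Since $ave(B)$ is a fixed real (as the theorem assumes, it is well defined), we obtain $ave(B)\ge opt(P)+\gamma-o(1)$, and therefore $ave(B)>opt(P)$.

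\textbf{Main obstacle.} The only delicate point is that $\mu=\log_N m$ may depend on $N$, in which case the gap $\gamma$ might shrink as $N$ grows. I would handle this as follows. The corollary states that $w$ is a constant satisfying the inequality for the given $\mu$. So either I assume $\mu$ is fixed, or, if $\mu$ varies, I interpret the strict inequality as holding with a margin bounded away from zero, so that $\gamma$ is bounded below uniformly. Under either reading the limiting argument above goes through.
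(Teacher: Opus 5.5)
Your proposal is correct and is essentially the paper's own ``straightforward calculation'': substitute $\log_N t_B(N,m,s)=o(1)$ and $\log_N s=w+o(1)$ into Theorem~1, and note that $w<\frac{opt(P)-\log_N m}{opt(P)}$ is equivalent to $\frac{opt(P)-\log_N m}{w}>opt(P)$; your explicit argument that this $N$-independent gap survives the $o(1)$ terms fills in a step the paper leaves tacit. One small correction: $s\ge N/m\ge 1$ only gives $w\ge 0$, so positivity of $w$ really comes from $w\ge 1-\log_N m>0$ when $m<N$, and in the remaining case $m=N$, $w=0$ the bound of Theorem~1 blows up (its numerator tends to $opt(P)-1$, which is positive because the range for $w$ is nonempty only when $opt(P)>1$, while its denominator tends to $0$), so $ave(B)$ cannot be a well-defined real and the claim holds vacuously.
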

\junk{
By substituting $N^{\delta}$ for $t_B(N,m,s)$ in Theorem 1 and
performing straightforward calculations, we also obtain the following
corollary.}
\begin{corollary}
  If  $t_B(N,m,s)\le N^{\delta+o(1)}$, where $\delta$
  is a non-negative constant satisfying 
  $\delta <(1-\log_Ns)opt(P)-\log_Nm,$
  then $ave(B)> opt(P).$
\end{corollary}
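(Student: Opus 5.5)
We derive this corollary directly from Theorem~1 by substituting $t_B(N,m,s)\le N^{\delta+o(1)}$, so that $\log_N t_B(N,m,s)\le \delta+o(1)$. Theorem~1 then gives
$$ave(B)\ge \frac{opt(P)-\log_Nm-\delta-o(1)}{\log_Ns}.$$
It therefore suffices to show that the right-hand side is strictly larger than $opt(P)$ by a margin that survives the $o(1)$ term. Since $s\le N$ in any nontrivial setting (local memory need not exceed the input size), we have $0<\log_Ns\le 1$, and the target inequality is equivalent to
$$opt(P)-\log_Nm-\delta-o(1)> opt(P)\log_Ns,$$
that is, $\delta+o(1)<(1-\log_Ns)\,opt(P)-\log_Nm$. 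This is exactly the hypothesis on $\delta$, up to the $o(1)$ slack.

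The step requiring care is this $o(1)$ slack. The hypothesis is a strict inequality between quantities that may depend on $N$, through $\log_Ns$ and $\log_Nm$. I would read these quantities as constants, or at least treat the gap $\gamma=(1-\log_Ns)opt(P)-\log_Nm-\delta$ as a positive constant, in the same way the previous corollary treats $w$ as a constant. Then for sufficiently large $N$ the $o(1)$ term is below $\gamma/2$. This gives
$$ave(B)\ge opt(P)+\frac{\gamma/2}{\log_Ns}\ge opt(P)+\gamma/2>opt(P),$$
where the middle step uses $\log_Ns\le 1$.

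Finally, I would remark that the case $\delta=0$ recovers the previous corollary when $s=N^{w+o(1)}$. Indeed, with $\delta=0$ the condition $0<(1-w)opt(P)-\log_Nm$ is equivalent to $w<\frac{opt(P)-\log_Nm}{opt(P)}$. This serves as a consistency check on the calculation.
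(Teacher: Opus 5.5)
Your proof is correct and takes the same route as the paper: substitute $\log_N t_B(N,m,s)\le\delta+o(1)$ into Theorem~1 and rearrange. Your handling of the $o(1)$ slack, by taking the gap $\gamma$ to be a positive constant, spells out what the paper calls ``straightforward calculations''. One small tidy-up: you do not need the separate assumption $s\le N$, because $\delta\ge 0$, $\log_N m\ge 0$ and $opt(P)\ge 1$ already force $(1-\log_N s)\,opt(P)>0$, i.e.\ $\log_N s<1$, from the hypothesis itself.
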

\junk{\begin{proof}
  By Theorem 1, it suffices to
  solve the inequality stating that the right-hand side of
  the inequality in Theorem 1, with $\delta+o(1)$ substituted for
  $\log_Nt_B(N,m,s)$, is greater than $opt(P)$. The solution follows
  by straightforward calculations with respect to $\delta.$
 \qed
\end{proof}}
\begin{proof}
  By Theorem 1, it suffices to solve the inequality
  obtained by substituting $\delta+o(1)$ for $\log_Nt_B(N,m,s)$
  in the right-hand side of the inequality in Theorem~1
  and requiring that this expression exceeds $opt(P)$.
  The stated condition on $\delta$
  follows by straightforward calculations.
  \qed
\end{proof}

Our first example concerns
matrix multiplication, say $P=MM,$ in the MPC model
with $m=N^{\frac 12 +o(1)}$ and $s=N^{\frac 12 +o(1)}.$
We have
$opt(MM)=\omega/2$, where $\omega$ is the exponent of the fast matrix
multiplication \cite{VXZ24}. Corollary 1  implies that $ave(B) > opt(MM)$
whenever $t_B(N,m,s)=N^{o(1)}$ and $\omega >2.$ The present upper
bound on $\omega $ is $2.371552$ \cite{VXZ24}
which yields $opt(MM)<
1.186.$ Hence, the threshold value of $\delta$ in Corollary 2  must be
below $(1-\frac 12)1.186-\frac 12\le 0.093$ in the case of matrix
multiplication. Note that the fastest known protocol for
multiplying two $n\times n$ matrices
on the corresponding
congested clique with $n=N^{\frac 12}$ nodes
uses $O(n^{0.157})$ rounds
\cite{CK19}, i.e., $O(N^{0.0785})$ rounds.
Thus, the exponent of the round complexity of the MM protocol from
\cite{CK19} lies below the threshold in Corollary 2 for the assumed
MPC model.  However, a straightforward simulation of this MM protocol in
the MPC model would require a much larger local memory size than
that in the assumed MPC model, roughly proportional to the product of
the number of machines with the number of rounds which is
$\approx N^{0.5785}.$
Furthermore, the exponent of the round
complexity of this  protocol is slightly over the analogous threshold
for congested $n$-clique presented in \cite{L25}.
\junk{guaranteeing the exponent of
of the polynomial bounding the average time of local
computation to be larger than that $\omega/2$ of matrix
multiplication.}
  
Generally, the larger $opt(P)$ is, the larger becomes the gap between
$ave(B)$ and $opt(P)$. This is illustrated in our second example,
which concerns the all-pairs shortest path problem (APSP) in an
edge-weighted graph or digraph on $n$ vertices.

Assume $P=APSP$ and again consider the MPC model with $m=N^{\frac 12 +o(1)}$
and $s=N^{\frac 12 +o(1)}.$
No truly subcubic-time  (in the number of vertices)
sequential algorithm for APSP is
known \cite{VW08}.  Thus, the best known upper bound on $opt(APSP)$ is
$1.5.$ For example, Theorem 1 implies $ave(B) > opt(APSP)+0.4$, when
$t_B(N,m,s)=N^{o(1)}$ and $opt(APSP) =1.5.$ The threshold value of
$\delta$ in Corollary 2  must  therefore be $0.25$ for APSP.  The fastest
known protocol for APSP on an $n$-vertex graph in the congested
$n$-clique model uses $\tilde{O}(n^{1/3})$
rounds \cite{CK19}, i.e., $\tilde{O}(N^{1/6})$
rounds. Thus, the exponent of the round complexity of the APSP protocol
from \cite{CK19} is also below the threshold in Corollary 2
for the assumed MPC model.
However, again,
a straightforward simulation of this APSP protocol in
the MPC model would require a much larger local memory size than
that in the assumed MPC model, roughly proportional to the product of
the number of machines with the number of rounds which is
$\approx N^{2/3}.$
Moreover, the exponent of the round complexity of this APSP protocol
is also slightly above the analogous threshold for congested $n$-clique
presented in \cite{L25}. 

If the aforementioned MM and APSP protocols
on the congested $n$-clique from \cite{CK19} ($n\approx N^{\frac 12}$)
could be implemented using
 almost the same asymptotic number of rounds in the MPC model with
 $m=N^{\frac 12 +o(1)}$ and $s=N^{\frac 12 +o(1)}$,
 then the exponent of the average time complexity
 of the local computations in the implementations would exceed the
 exponent of the (known) time complexity of the MM or APSP problem,
 respectively.
   \section{Final Remarks}

   Of course, the focus of the MPC model is round complexity not
   complexity of local computations.  Similarly, the main focus of our
   results is round complexity.  We show that
   if the round complexity of a protocol
   $B$ and the local memory size $s$ satisfy the assumptions of
   Corollary 1 and/or 2, then the exponent $ave(B)$ of the average
   local computation time of $B$ has to be strictly greater than the
   exponent of the time complexity of the input problem $P.$ This does
   not rule out the existence of such an MPC protocol $B$; rather, it
   suggests that any such protocol would necessarily be highly
   non-trivial.

   Although the main idea of our results coincides with that for
   congested clique from \cite{L25} there are substantial differences,
   e.g., the dependence on the local memory size $s$ and more clear
   concept of local computation complexity. In the congested clique
   model, a machine can store all messages received in prior rounds,
   so in \cite{L25}, one considers all the computations performed by a
   single machine during a protocol run as a local computation whose
   input is the union of the messages received during all rounds. For
   these reasons, the thresholds corresponding to those in Corollaries
   1 and 2 are quite different in \cite{L25}.
   
   It is an interesting question if there are any natural
   reasons for relatively high complexity of local
   computations in the MPC model? For instance,
   higher local computation complexity
   might arise  when certain computations are duplicated
   across multiple machines in order to reduce communication.
  However, this scenario can be handled
   by disregarding duplicating computations in the definition
   of $ave(B)$ without affecting the validity of our argument.
   Another natural reason for higher complexity
   of local computations in the MPC model can be the limit
   $s$ on the local memory size, in particular limiting the
   working local space. However, our results can be easily
   generalized to include a relaxed MPC model without
   any limit on local working space, where $s$ solely limits
   the size of data that a machine can take to its computation
   in the next round.
   \junk{Note also that the limit $s$ on the size of local memory
     could be replaced just by the (implied) limit  $s$ on the size of data
     that a single machine can take to the next round, again without
     affecting the validity of the argument.}

     Current research in the MPC model is heavily
focused on designing very "fast" protocols, preferably of $O(1)$
round complexity. This focus effectively restricts attention to
problems with linear or near‑linear sequential time complexity
(assuming the number of machines does not exceed the input size). This
is unfortunate, because problems of substantially super‑linear
sequential time complexity are not less important and, in fact, need even
more parallel speedup.
One reason for this narrow focus is the absence of at least
logarithmic lower bounds on round complexity for natural
problems. $O(1)$-round protocols are easy to publish because they are
nearly optimal, while polynomial‑round upper bounds for problems of
super‑linear sequential time complexity appear hardly attractive because of
the absence of stronger lower bounds on round complexity.
\junk{
Our simple observations identify thresholds: if an algorithm for a
problem of subs
\item
initially super-linear sequential time complexity uses
fewer rounds than these thresholds, then its local computations must,
on average, have a higher time‑complexity exponent than that in the
sequential complexity of the original problem. This does not rule out
algorithms of round complexity below the thresholds, but it suggests
that any such algorithms would need to be highly non‑trivial.}
Since proving such lower bounds is highly
unlikely, the
thresholds from Corollaries 1  and 2  could serve as
practical reference points for evaluating
polynomial‑round upper bounds for problems of substantially
super‑linear time complexity. In this way, they could help broaden the
current narrow focus of MPC research and stimulate progress on a wider
class of problems.
   \junk{Note also the limit $s$ of the size of local memory
     bounds the amount of data that a single machine can take to the next
     round what is essential to our argument. On the other hand,
     the limit on the size of local memory during separate local computations
     is not essential and could be removed, again without affecting
     the validity of the argument.
     Note also that the limit $s$ on the size of local memory
     could be replaced just be the limit  $s$ on the size of data
     that a single machine can take to the next round. again without
     affecting the validity of the argument.

\section{Discussion}

Theorem 1 can be interpreted as a trade-off between the number of
rounds and the average time complexity of the local computations
(in terms of the size of local memory) of a protocol solving a problem
in the MPC model with $m$ machines equipped with local memory of size $s.$
The smaller is the
number of rounds required by the protocol the larger must be the
average time complexity of the involved local computations.  By
Corollary 1, if the problem has a substantially super-linear
polynomial-time complexity $O(N^{\ell})$
and the size $s$ of local memory is
bounded by $N^{\beta+o(1)},$ where $w< \frac {\ell -1}{\ell},$
then the exponent of the polynomial bounding the average time
complexity of local computations (in terms of $s$)
has to be larger than $\ell$ (even if all the involved local
computations are necessary and they use optimal algorithms). Such a
scenario is not excluded by the MPC model as the model
solely restricts the size of local memory. Nevertheless, our
results suggest that designing $N^{o(1)}$-round protocols for the
aforementioned problems when the size $s\ge N/m$ of local memory
is below the upper bound stated in Corollary 1
might be highly non-trivial if at all
possible.  A compression and reuse of data exchanged by the machines,
and/or performing some identical/similar computations by several
machines, all in order to save on the communication and storage
might be natural but
not necessarily sufficient reasons for the higher asymptotic
complexity of the local computations.
Theorem 1 can be easily extended to include a generalized CONGEST
model with $N^{\alpha}$ nodes (processing units), $0\le \alpha \le 1,$
where each node is connected by directed communication links with
$N^{\beta}$ other nodes, $0\le \beta \le \alpha,$ and initially
holds a distinct part of the input of size $N^{1-\alpha}.$

\section*{Acknowledgments}
Thanks go to all who helped to improve the presentation of the note.}
\bibliographystyle{plain}
  \bibliography{spavoronoi}
  
  \vfill
  \end{document}